\newcommand{\cmark}{\ding{51}}%
\newcommand{\xmark}{\ding{55}}%
\newtheorem{definition}{Definition}
\newtheorem{theorem}{Theorem}
\newtheorem{proposition}{Proposition}
\title{Differential Voting: Loss Functions For Axiomatically Diverse Aggregation of Heterogeneous Preferences}
\author{
    Zhiyu An
    \qquad
    Duaa Nakshbandi
    \qquad
    Wan Du
    \affiliations
    University of California, Merced
    \emails
    \{zan7, duaanakshbandi, wdu3\}@ucmerced.edu
}
\begin{document}

\maketitle

\begin{abstract}
Reinforcement learning from human feedback (RLHF) implicitly aggregates heterogeneous human preferences into a single utility function, even though the underlying utilities of the participants are in practice diverse. Hence, RLHF can be viewed as a form of voting, where the aggregation mechanism is defined by the loss function. Although Arrow’s Impossibility Theorem suggests that different mechanisms satisfy different sets of desirable axioms, most existing methods rely on a single aggregation principle, typically the Bradley–Terry–Luce (BTL) model, which corresponds to Borda count voting. This restricts the axiomatic properties of the learned reward and obscures the normative assumptions embedded in optimization.
In this work, we introduce Differential Voting, a unifying framework that constructs instance-wise, differentiable loss functions whose population-level optima provably correspond to distinct classical voting rules. We develop differentiable surrogates for majority-based aggregation (BTL), Copeland, and Kemeny rules, and formally analyze their calibration properties, gradient fields, and limiting behavior as smoothing parameters vanish. For each loss, we establish consistency with the corresponding social choice rule and characterize the axioms it satisfies or violates. Our analysis shows how design choices in loss geometry—such as margin sensitivity and boundary concentration—directly translate into normative aggregation behavior.
Differential Voting makes preference aggregation an explicit and controllable design choice in RLHF, enabling principled trade-offs between axiomatic guarantees and optimization stability. Code to reproduce our experiments is open-sourced.\footnote{https://github.com/ryeii/Differential-Voting}

\end{abstract}

\begin{table*}[t]
\centering
\scriptsize
\begin{tabular}{lllccccc}
\toprule
\multirow{2}{*}{Method} &
\multirow{2}{*}{Loss / Objective} &
\multirow{2}{*}{Corresp. Voting System} &
\multicolumn{4}{c}{Axioms} &
\multirow{2}{*}{Diff.} \\
\cmidrule(lr){4-7}
& & &
\begin{tabular}[c]{@{}c@{}}Majority\\ Winner\end{tabular} &
\begin{tabular}[c]{@{}c@{}}Pareto\\ (PO)\end{tabular} &
IIA/LIIA &
\begin{tabular}[c]{@{}c@{}}Condorcet\\ (PMC)\end{tabular} &
\\
\midrule

\textbf{BTL (logistic)} &
$\sum_{a\neq b\in C}\log(1+\exp(r_\theta(b)-r_\theta(a)))$ &
Borda-like scoring &
\xmark & \cmark & \xmark & \xmark & \cmark \\

\textbf{Exponential} &
$\sum_{a\neq b\in C}\exp(r_\theta(b)-r_\theta(a))$ &
Borda-like scoring &
\xmark & \cmark & \xmark & \xmark & \cmark \\

\textbf{Hinge} &
$\sum_{a\neq b\in C}\max(0,1+r_\theta(b)-r_\theta(a))$ &
Borda-like scoring &
\xmark & \cmark & \xmark & \xmark & \cmark \\

\textbf{Majority-based} &
\begin{tabular}[c]{@{}l@{}}
Any rule depending only on\\
pairwise majority directions
\end{tabular} &
(Pairwise) Majority aggregation &
\cmark & \cmark & \xmark & \cmark & \xmark \\

\textbf{Linear Kemeny} &
$\sum_{a\neq b\in C} n_{a\succ b}\cdot \mathbf{1}[r_\theta(b)>r_\theta(a)]$ &
Kemeny (feasible-ranking variant) &
\xmark & \cmark & \xmark & \cmark & \xmark \\

\textbf{LCPO} &
\begin{tabular}[c]{@{}l@{}}
Leximax Copeland subject to PO\\
(implemented via LP feasibility checks)
\end{tabular} &
Copeland (PO-constrained leximax) &
\cmark & \cmark & \xmark & \cmark & \xmark \\

\midrule

\textbf{Soft Copeland (ours)} &
$-y\,s_{\tau,\beta}(\Delta)+\tfrac{\lambda}{2}\Delta^2$ &
Copeland (smooth surrogate) &
\cmark & \cmark & \xmark & \cmark & \cmark \\

\textbf{Soft Kemeny (ours)} &
$\sigma(-y\,\Delta/\tau)$ &
Kemeny (smooth Kendall surrogate) &
\xmark & \cmark & \xmark & \cmark & \cmark \\

\bottomrule
\end{tabular}
\caption{Existing aggregation objectives in (linear) RLHF-style social choice and their axiomatic properties. "Diff." = Differentiable.
Here PO denotes standard Pareto/unanimity (if all voters rank $a$ above $b$, then the social outcome ranks $a$ above $b$),
and PMC denotes the Condorcet winner criterion (if a Condorcet winner exists, it is selected).
Borda-like scoring satisfies PO but violates Majority-winner, IIA/LIIA, and PMC in general; Copeland and Kemeny satisfy PMC but violate IIA/LIIA.
Checkmarks follow the setting and claims discussed in \cite{siththaranjan2023distributional,ge2024axioms} together with standard social-choice facts.}
\label{tab:loss_axioms_completed}
\end{table*}

\section{Introduction}

Reinforcement learning from human feedback (RLHF) has become a dominant paradigm
for aligning machine learning systems with human preferences, particularly in
the training of large language models.
In its standard form, RLHF learns a scalar reward function from pairwise human
comparisons and then optimizes a policy against this learned reward.
Despite its empirical success, this pipeline implicitly aggregates feedback from
multiple annotators, contexts, and latent objectives into a single utility
function, raising a fundamental question:
\emph{what aggregation rule is actually being implemented?}

Recent work has argued that preference learning in RLHF should be understood as a
problem of \emph{social choice}, where heterogeneous human judgments are combined
into a collective decision \cite{ge2024axioms,siththaranjan2023distributional}.
From this perspective, the choice of training loss is not merely a statistical
detail but a normative design decision, encoding assumptions about how conflicting
preferences should be reconciled.
Classical impossibility results, such as Arrow's theorem, imply that no single
aggregation rule can satisfy all desirable axioms simultaneously.
Nevertheless, most existing preference-learning methods rely on a narrow family
of smooth pairwise losses, typically variants of the Bradley--Terry--Luce (BTL)
model, whose induced aggregation behavior is rarely made explicit.

A growing theoretical literature has revealed that this choice has substantive
consequences.
Under heterogeneous or hidden contexts, optimizing BTL-style losses corresponds
to a Borda-like scoring rule at the population level, which can violate basic
axioms such as the Condorcet winner criterion and majority consistency
\cite{siththaranjan2023distributional,ge2024axioms}.
At the same time, alternative aggregation rules with stronger axiomatic
properties—such as Copeland or Kemeny—are typically formulated as discrete or
combinatorial objectives, making them difficult to integrate into gradient-based
training pipelines.

In this work, we bridge this gap by introducing \emph{Differential Voting}, a
unifying framework for designing instance-wise, differentiable loss functions
whose population-level optima correspond to distinct classical voting rules.
Rather than treating preference aggregation as an implicit byproduct of loss
minimization, Differential Voting makes the aggregation mechanism an explicit,
controllable design choice.

We develop differentiable surrogates for three canonical aggregation principles:
pairwise majority (via the standard BTL loss), Copeland aggregation, and Kemeny
aggregation.
For each loss, we characterize its gradient field, prove consistency with the
corresponding social choice rule in the appropriate limit of smoothing
parameters, and analyze which axioms are satisfied or violated.
Our analysis reveals how geometric features of the loss—such as margin
sensitivity, saturation, and boundary concentration—translate directly into
normative aggregation behavior.
In particular, we show that losses that are locally calibrated for pairwise
classification can nonetheless induce globally distinct aggregation rules.

We complement our theoretical results with a suite of controlled synthetic
experiments.
These experiments demonstrate that different losses recover different classical
aggregation rules at the population level, behave differently under
heterogeneous preferences, and exhibit characteristic optimization geometries
that explain their axiomatic properties.
Together, these results show that preference optimization losses are not
interchangeable surrogates but implement qualitatively different social choice
mechanisms.

\paragraph{Contributions.}
Our main contributions are:
\begin{itemize}[leftmargin=1.5em]
    \item We introduce \emph{Differential Voting}, a framework for constructing
    differentiable, instance-wise losses whose population optima correspond to
    classical voting rules.
    \item We propose differentiable surrogates for Copeland and Kemeny aggregation
    and prove their consistency and limiting behavior under smoothing.
    \item We analyze the axiomatic properties and loss geometry of standard and
    proposed objectives, clarifying the normative assumptions encoded by loss
    design.
    \item We empirically validate our theory through controlled experiments that
    isolate aggregation behavior, axiomatic satisfaction, and optimization
    dynamics.
\end{itemize}

\section{Related Work}

\textbf{RLHF and preference optimization.}
Reinforcement learning from human feedback (RLHF) aligns models by learning a
reward function from pairwise human comparisons and optimizing a policy against
it \cite{stiennon2020learning,ouyang2022training}.
Most implementations learn a Bradley--Terry--Luce (BTL) / logistic preference
model and fine-tune with a KL-regularized RL objective.
Direct Preference Optimization (DPO) shows that an equivalent objective can be
optimized via a classification-style loss under a particular reward
parameterization \cite{rafailov2023direct}.
These methods build on classical pairwise ranking losses such as RankNet, which
optimize smooth surrogates of ranking quality using logistic link functions
\cite{burges2005learning}.

\textbf{Social choice and axioms for alignment.}
Recent work reframes preference learning in RLHF as a problem of \emph{preference
aggregation}, emphasizing evaluation through axioms from social choice theory.
Ge et al.\ introduce an axiomatic framework for learning reward functions from
comparisons and show that BTL and broad convex generalizations violate basic
desiderata in their \emph{linear social choice} setting \cite{ge2024axioms}.
They further demonstrate that function-approximation constraints fundamentally
affect which axioms are achievable, motivating aggregation rules such as LCPO
that satisfy stronger guarantees than standard loss-based methods.

\textbf{Heterogeneous preferences and hidden context.}
Feedback in deployed alignment is inherently heterogeneous, reflecting diverse
annotators and latent objectives.
Siththaranjan et al.\ formalize this via \emph{hidden context} and show that
standard preference learning implicitly aggregates across contexts according to
a Borda-like scoring rule, which can lead to counterintuitive outcomes
\cite{siththaranjan2023distributional}.
This work highlights that the choice of loss encodes normative assumptions about
how conflicting judgments are combined.

\textbf{Pluralism and alternative aggregation targets.}
Complementary lines of work study alignment under pluralistic or adversarial
settings.
G{\"o}lz et al.\ analyze \emph{distortion} between achieved and optimal welfare,
sharply distinguishing RLHF/DPO from alternatives such as Nash Learning from
Human Feedback (NLHF) \cite{golz2025distortion}.
Other approaches modify the aggregation target itself: maximal lotteries provide
a probabilistic Condorcet extension for cyclic preferences
\cite{maura2025jackpot}, while NLHF frames alignment as computing equilibria of
learned preference models \cite{munos2024nash}.

\textbf{Positioning.}
Prior work shows that standard differentiable losses implicitly implement
specific aggregation rules that can violate appealing axioms, while methods with
stronger normative guarantees often rely on discrete or non-smooth objectives.
Differential Voting complements these approaches by making the aggregation rule
an explicit \emph{loss-level} design choice, enabling gradient-based optimization
while targeting distinct voting rules via calibrated, differentiable
surrogates.

\section{Differential Voting: From Classical Aggregation to Differentiable Losses}
\label{sec:differential-voting}

We view RLHF as \emph{implicit preference aggregation}: heterogeneous pairwise
judgments are combined into a single scalar reward function, and the choice of
loss determines which aggregation principle is implemented at the population
optimum. We construct instance-wise, differentiable losses whose induced
population objectives correspond to classical voting rules.

Let $\mathcal{D}$ be a distribution over pairwise comparisons $(x_i,x_j,y,c)$,
where $y\in\{-1,+1\}$ indicates whether $x_i \succ x_j$ under context $c$.
A reward model $r_\theta(x,c)\in\mathbb{R}$ induces the margin
$\Delta_{ij}(c)=r_\theta(x_i,c)-r_\theta(x_j,c)$ and win probability
\begin{equation}
\label{eq:winprob}
p_\theta(x_i \succ x_j\mid c)=\sigma\!\left(\frac{\Delta_{ij}(c)}{\tau}\right),
\qquad
\sigma(z)=\frac{1}{1+e^{-z}},
\end{equation}
with temperature $\tau>0$. Define the true pairwise preference probability
\[
\eta(x_i,x_j,c) := \Pr(y=+1\mid x_i,x_j,c).
\]
We also use the shorthand $\Delta=\Delta_{ij}(c)$ and $p=\sigma(\Delta/\tau)$.

Given a context $c$ (or suppressing $c$ when fixed), the learned reward induces
an ordering by sorting $r_\theta(\cdot,c)$. Our results concern the population
optima of the losses and their relationship to the underlying pairwise
probabilities $\eta$ and classical social choice rules.

\paragraph{Population optima, realizability, and scope of claims.}
All consistency and axiomatic statements in this section concern \emph{population}
objectives and their minimizers over the hypothesis class
$\{r_\theta(\cdot,c):\theta\in\Theta\}$.
When we say that a loss ``implements'' or ``corresponds to'' a classical voting
rule, this means that the induced population objective coincides with the
objective of that rule \emph{restricted to the set of rankings realizable by the
model class}.
If the hypothesis class is rich enough to represent a ranking selected by the
classical rule, then the corresponding population minimizer coincides with that
rule; otherwise, the minimizer coincides with the rule \emph{restricted to the
feasible set}.
We do not make claims about optimization dynamics or convergence of specific
algorithms, only about population-level optima.

\subsection{BTL / Logistic Loss: Local Majority Calibration}
\label{subsec:btl-fisher}

We begin with the standard Bradley--Terry--Luce (BTL) / logistic loss as a
baseline. Prior work shows that, when pairwise losses are aggregated across
alternatives and heterogeneous contexts, minimizing the BTL objective induces a
Borda-like scoring rule at the population level
\cite{siththaranjan2023distributional,ge2024axioms}.
Here we record a complementary and well-known fact at a different level of
analysis: \emph{for a fixed pair under a fixed context}, the BTL loss is
classification-calibrated for the pairwise majority direction.

This result is classical, but we state it explicitly to clarify the distinction
between \emph{local statistical calibration} and \emph{global aggregation
behavior}, a theme that will recur for Copeland- and Kemeny-style losses.

\paragraph{Setup.}
Fix a comparison $(x_i,x_j,c)$ and let
\[
\eta := \Pr(y=+1 \mid x_i,x_j,c).
\]
Let $\Delta = r_\theta(x_i,c)-r_\theta(x_j,c)$.

\begin{definition}[BTL / logistic loss]
\label{def:btl}
\[
\mathcal{L}_{\mathrm{BTL}}(\Delta,y)
=
\log\!\left(1+\exp\!\left(-\frac{y\Delta}{\tau}\right)\right),
\qquad \tau>0 .
\]
\end{definition}

\begin{proposition}[Pairwise majority calibration]
\label{prop:btl-fisher}
Consider the conditional risk
\begin{multline*}
    \mathcal{R}_{\mathrm{BTL}}(\Delta)
=
\mathbb{E}\big[\mathcal{L}_{\mathrm{BTL}}(\Delta,Y)\mid x_i,x_j,c\big]\\
=
\eta\log(1+e^{-\Delta/\tau})+(1-\eta)\log(1+e^{\Delta/\tau}).
\end{multline*}
\begin{itemize}[leftmargin=1.5em]
\item If $\eta\in(0,1)$, $\mathcal{R}_{\mathrm{BTL}}$ is strictly convex and has the
unique minimizer
\[
\Delta^\star = \tau\log\frac{\eta}{1-\eta},
\qquad
\operatorname{sign}(\Delta^\star)=\operatorname{sign}\!\big(\eta-\tfrac12\big).
\]
\item If $\eta=1$ (resp.\ $\eta=0$), the risk has no finite minimizer and its
infimum is approached as $\Delta\to+\infty$ (resp.\ $\Delta\to-\infty$).
\end{itemize}
Thus, whenever $\eta\neq\tfrac12$, minimizing the BTL loss recovers the correct
pairwise majority direction.
\end{proposition}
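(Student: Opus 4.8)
The plan is to directly analyze the conditional risk $\mathcal{R}_{\mathrm{BTL}}(\Delta)$ as a one-dimensional function of the scalar margin $\Delta$, treating $\eta$ as a fixed parameter. This is a completely elementary calculus argument once the risk is written out; the only subtlety is handling the boundary cases $\eta\in\{0,1\}$ separately from the interior case.

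First I would treat the interior case $\eta\in(0,1)$. Differentiate $\mathcal{R}_{\mathrm{BTL}}(\Delta)$ with respect to $\Delta$, using $\tfrac{d}{d\Delta}\log(1+e^{-\Delta/\tau})=-\tfrac{1}{\tau}\sigma(-\Delta/\tau)$ and the analogous identity for the second term. The derivative simplifies to $\mathcal{R}'_{\mathrm{BTL}}(\Delta)=\tfrac{1}{\tau}\big(\sigma(\Delta/\tau)-\eta\big)$, after collecting terms via the identity $\sigma(z)+\sigma(-z)=1$. Setting this to zero gives $\sigma(\Delta^\star/\tau)=\eta$, and inverting the logistic function (the inverse is the logit) yields $\Delta^\star=\tau\log\frac{\eta}{1-\eta}$. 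For strict convexity, I would compute the second derivative $\mathcal{R}''_{\mathrm{BTL}}(\Delta)=\tfrac{1}{\tau^2}\sigma(\Delta/\tau)\big(1-\sigma(\Delta/\tau)\big)$, which is strictly positive for all finite $\Delta$ since $\sigma$ takes values in $(0,1)$; this both establishes strict convexity and confirms uniqueness of the critical point. The sign claim is then immediate: $\Delta^\star>0 \iff \eta>\tfrac12$, because $\log\frac{\eta}{1-\eta}$ is strictly increasing and vanishes exactly at $\eta=\tfrac12$.

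For the boundary cases, take $\eta=1$ (the case $\eta=0$ is symmetric under $\Delta\mapsto-\Delta$). Then $\mathcal{R}_{\mathrm{BTL}}(\Delta)=\log(1+e^{-\Delta/\tau})$, which is strictly decreasing in $\Delta$ with $\mathcal{R}'_{\mathrm{BTL}}(\Delta)=-\tfrac{1}{\tau}\sigma(-\Delta/\tau)<0$ everywhere. Hence no finite minimizer exists, and the infimum $0$ is approached only as $\Delta\to+\infty$; symmetrically the $\eta=0$ risk is minimized as $\Delta\to-\infty$. This matches the informal statement that the calibrated direction still points the correct way ``at infinity.''

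I do not anticipate a genuine obstacle here, since the whole argument is self-contained single-variable calculus; the one place to be careful is the algebraic simplification of the first derivative, where one must correctly apply $\sigma(-z)=1-\sigma(z)$ to collapse the two logistic terms into the clean form $\sigma(\Delta/\tau)-\eta$. Everything else—strict convexity from positivity of the second derivative, uniqueness, and the sign characterization—follows mechanically. The final sentence of the proposition (recovering the majority direction whenever $\eta\neq\tfrac12$) is just a restatement combining the interior sign result with the boundary limiting behavior, so no additional work is required beyond noting that $\operatorname{sign}(\Delta^\star)$ agrees with $\operatorname{sign}(\eta-\tfrac12)$ in all cases.
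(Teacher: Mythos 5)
Your argument is correct and complete: the derivative computation $\mathcal{R}'_{\mathrm{BTL}}(\Delta)=\tfrac{1}{\tau}\bigl(\sigma(\Delta/\tau)-\eta\bigr)$, the strictly positive second derivative $\tfrac{1}{\tau^2}\sigma(\Delta/\tau)\bigl(1-\sigma(\Delta/\tau)\bigr)$, the logit inversion giving $\Delta^\star=\tau\log\tfrac{\eta}{1-\eta}$, and the monotone treatment of the boundary cases $\eta\in\{0,1\}$ all check out. The paper itself states this proposition without proof, explicitly flagging it as a classical fact, so there is no in-paper argument to compare against; your single-variable calculus derivation is exactly the standard proof one would supply, and nothing is missing.
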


\paragraph{Remark (local vs.\ global behavior).}
Proposition~\ref{prop:btl-fisher} is a \emph{local} statement: it concerns the
optimal margin for a single ordered pair $(x_i,x_j)$ under a fixed context.
It does not characterize the \emph{multi-alternative aggregation rule} induced
by minimizing sums of pairwise BTL losses across datasets.
As shown in prior work, when such losses are aggregated under heterogeneous
contexts and standard sampling schemes, the resulting population objective
implements a Borda-like scoring principle rather than a majority rule
\cite{siththaranjan2023distributional,ge2024axioms}.

\subsection{Soft Copeland Voting}
\label{subsec:cop}

Copeland aggregation counts only whether an alternative wins or loses each
head-to-head contest, ignoring margin magnitudes. A differentiable surrogate
should therefore (i) be odd in the margin, (ii) saturate for confident
wins/losses, and (iii) concentrate gradient mass near the win/loss boundary.
At the same time, to obtain a well-posed population objective with finite
minimizers, some form of regularization is required.

\paragraph{Soft Copeland edge score.}
Define the soft Copeland edge score
\begin{equation}
\label{eq:cop-score}
s_{\tau,\beta}(\Delta)
=
\tanh\!\left[\beta\left(\sigma(\Delta/\tau)-\tfrac12\right)\right],
\end{equation}
where $\sigma(z)=(1+e^{-z})^{-1}$, $\tau>0$ controls smoothness of the pairwise
decision boundary, and $\beta>0$ controls saturation.
The function $s_{\tau,\beta}$ is odd, bounded in $[-1,1]$, and strictly increasing
in $\Delta$.

\begin{definition}[Regularized Soft Copeland Loss]
\label{def:cop}
For one instance $(x_i,x_j,y,c)$ define
\begin{equation}
\label{eq:cop-loss}
\mathcal{L}_{\mathrm{Cop}}(\theta;x_i,x_j,y,c)
=
-\,y\,s_{\tau,\beta}(\Delta_{ij}(c))
\;+\;
\frac{\lambda}{2}\,\Delta_{ij}(c)^2,
\end{equation}
where $\lambda>0$ is a (small) margin-regularization parameter.
\end{definition}

\paragraph{Remark (well-posedness).}
Without the quadratic term, the conditional risk is minimized only by sending
$\Delta\to\pm\infty$ whenever $\eta\neq\tfrac12$.
The regularizer makes the objective coercive and yields finite population optima
while preserving the Copeland-style saturation geometry.
In practice, such regularization corresponds to explicit weight decay, KL
penalties, or implicit regularization from early stopping.

\paragraph{Gradient field.}
Let $p=\sigma(\Delta/\tau)$.
Since
\[
\frac{d}{d\Delta}s_{\tau,\beta}(\Delta)
=
\frac{\beta}{\tau}\,p(1-p)\,\mathrm{sech}^2\!\big(\beta(p-\tfrac12)\big),
\]
we obtain
\begin{equation}
\label{eq:cop-dLdDelta}
\frac{\partial \mathcal{L}_{\mathrm{Cop}}}{\partial \Delta}
=
-\,y\;\frac{\beta}{\tau}\;p(1-p)\;\mathrm{sech}^2\!\big(\beta(p-\tfrac12)\big)
\;+\;\lambda\Delta.
\end{equation}
Parameter gradients follow by multiplying by
$\nabla_\theta\Delta_{ij}(c)=\nabla_\theta r_\theta(x_i,c)-\nabla_\theta r_\theta(x_j,c)$.

\paragraph{Pairwise Copeland consistency.}
We characterize the population-optimal margin for a fixed pair.

\begin{theorem}[Pairwise Copeland direction consistency with finite optima]
\label{thm:cop-pairwise-consistency}
Fix $(x_i,x_j,c)$ and let $\eta=\Pr(y=+1\mid x_i,x_j,c)$.
The conditional risk
\[
\mathcal{R}_{\mathrm{Cop}}(\Delta)
=
-(2\eta-1)\,s_{\tau,\beta}(\Delta)
+\frac{\lambda}{2}\Delta^2
\]
admits at least one finite minimizer $\Delta^\star$.
Moreover,
\[
\operatorname{sign}(\Delta^\star)=\operatorname{sign}(2\eta-1)
\qquad\text{whenever }\eta\neq\tfrac12,
\]
and if $\eta=\tfrac12$ then $\Delta^\star=0$ is the unique minimizer.
Finally, as $\lambda\downarrow 0$ with $(\tau,\beta)$ fixed, any sequence of
minimizers satisfies $|\Delta^\star|\to\infty$ when $\eta\neq\tfrac12$.
\end{theorem}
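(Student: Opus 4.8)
The plan is to treat the four claims in increasing order of difficulty, exploiting at each stage the fact that $s_{\tau,\beta}$ is odd, strictly increasing, and bounded in $[-1,1]$, attaining its supremum $1$ only in the limit $\Delta\to+\infty$.

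\emph{Existence.} First I would establish coercivity. Because $s_{\tau,\beta}\in[-1,1]$, the term $-(2\eta-1)s_{\tau,\beta}(\Delta)$ is uniformly bounded, so $\mathcal{R}_{\mathrm{Cop}}(\Delta)\ge \tfrac{\lambda}{2}\Delta^2-|2\eta-1|\to+\infty$ as $|\Delta|\to\infty$. Together with continuity of both terms, coercivity yields a finite global minimizer by a Weierstrass argument: restrict to a compact interval outside which the risk exceeds $\mathcal{R}_{\mathrm{Cop}}(0)$ and minimize the continuous function there.

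\emph{Degenerate case and sign.} For $\eta=\tfrac12$ the linear term vanishes and $\mathcal{R}_{\mathrm{Cop}}(\Delta)=\tfrac{\lambda}{2}\Delta^2$, whose unique minimizer is $\Delta^\star=0$. For $\eta\neq\tfrac12$ I would use the reflection identity that follows from oddness of $s_{\tau,\beta}$ and evenness of the quadratic, namely $\mathcal{R}_{\mathrm{Cop}}(\Delta)-\mathcal{R}_{\mathrm{Cop}}(-\Delta)=-2(2\eta-1)s_{\tau,\beta}(\Delta)$. Taking $\eta>\tfrac12$ (the case $\eta<\tfrac12$ is symmetric), for every $\Delta>0$ we have $s_{\tau,\beta}(\Delta)>0$, hence $\mathcal{R}_{\mathrm{Cop}}(\Delta)<\mathcal{R}_{\mathrm{Cop}}(-\Delta)$, so no point $\Delta\le 0$ can minimize except possibly $\Delta=0$. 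To exclude $\Delta=0$ I would invoke the gradient-field computation, which gives $s_{\tau,\beta}'(0)=\beta/(4\tau)>0$ and therefore $\mathcal{R}_{\mathrm{Cop}}'(0)=-(2\eta-1)s_{\tau,\beta}'(0)<0$; the risk strictly decreases as $\Delta$ increases through $0$, ruling out $\Delta=0$. Consequently every minimizer is strictly positive, giving $\operatorname{sign}(\Delta^\star)=\operatorname{sign}(2\eta-1)$.

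\emph{Divergence as $\lambda\downarrow 0$.} This is the step I expect to be the main obstacle, since it must quantitatively pit the vanishing regularizer against the saturating reward, and must apply to \emph{any} selection of (possibly non-unique) minimizers. Fix $\eta>\tfrac12$ by symmetry and use a comparison-plus-contradiction scheme. For an upper bound on the optimal value, evaluate at an arbitrary fixed $\Delta_0>0$ to get $\mathcal{R}_{\mathrm{Cop}}(\Delta^\star)\le -(2\eta-1)s_{\tau,\beta}(\Delta_0)+\tfrac{\lambda}{2}\Delta_0^2$; letting $\lambda\downarrow 0$ and then $\Delta_0\to\infty$ (so $s_{\tau,\beta}(\Delta_0)\to 1$) yields $\limsup_{\lambda\downarrow0}\mathcal{R}_{\mathrm{Cop}}(\Delta^\star)\le -(2\eta-1)$. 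Now suppose toward a contradiction that some sequence of minimizers stays bounded, $0<\Delta^\star\le M$. Discarding the nonnegative quadratic term and using strict monotonicity gives $\mathcal{R}_{\mathrm{Cop}}(\Delta^\star)\ge -(2\eta-1)s_{\tau,\beta}(\Delta^\star)\ge -(2\eta-1)s_{\tau,\beta}(M)$, where crucially $s_{\tau,\beta}(M)<1$ strictly because the supremum is only approached at infinity. Hence $\liminf \mathcal{R}_{\mathrm{Cop}}(\Delta^\star)\ge -(2\eta-1)s_{\tau,\beta}(M)>-(2\eta-1)$, contradicting the $\limsup$ bound. Phrasing this via subsequences shows no subsequence of minimizers remains bounded, i.e.\ $|\Delta^\star|\to\infty$, which is the assertion to be proved.
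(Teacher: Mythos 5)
Your proof follows essentially the same route as the paper for the first three claims: coercivity plus Weierstrass for existence, the reflection identity $\mathcal{R}_{\mathrm{Cop}}(\Delta)-\mathcal{R}_{\mathrm{Cop}}(-\Delta)=-2(2\eta-1)s_{\tau,\beta}(\Delta)$ combined with $\mathcal{R}'_{\mathrm{Cop}}(0)=-(2\eta-1)s'_{\tau,\beta}(0)<0$ to pin down the sign, and the reduction to $\tfrac{\lambda}{2}\Delta^2$ when $\eta=\tfrac12$; your explicit value $s'_{\tau,\beta}(0)=\beta/(4\tau)$ is correct. For the $\lambda\downarrow 0$ divergence, you actually do more than the paper: the paper dispatches this in one informal sentence, whereas your comparison-plus-contradiction scheme (upper-bound the optimal value by evaluating at a fixed $\Delta_0$, then show a bounded subsequence of minimizers would keep the value strictly above that bound) is a genuinely rigorous version of the same idea. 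One small slip: the supremum of $s_{\tau,\beta}$ is $\tanh(\beta/2)$, not $1$, since $\sigma(\Delta/\tau)-\tfrac12\to\tfrac12$ as $\Delta\to+\infty$; so your line ``$s_{\tau,\beta}(\Delta_0)\to 1$'' should read ``$\to\tanh(\beta/2)$.'' This does not damage the argument, because all you use is that the supremum is approached only as $\Delta\to+\infty$ and that strict monotonicity forces $s_{\tau,\beta}(M)$ strictly below it, both of which remain true; with that substitution your proof is complete and, on the final claim, tighter than the paper's own.
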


\begin{proof}
Since $s_{\tau,\beta}(\Delta)\in[-1,1]$ and $\frac{\lambda}{2}\Delta^2\to\infty$
as $|\Delta|\to\infty$, $\mathcal{R}_{\mathrm{Cop}}$ is coercive and attains a
finite minimum.

Write $\gamma:=2\eta-1$. The function $s_{\tau,\beta}$ is odd and strictly
increasing, while $\Delta^2$ is even. Hence for any $\Delta>0$,
\begin{multline*}
    \mathcal{R}_{\mathrm{Cop}}(\Delta)-\mathcal{R}_{\mathrm{Cop}}(-\Delta)
=
-\gamma\bigl(s_{\tau,\beta}(\Delta)-s_{\tau,\beta}(-\Delta)\bigr)\\
=
-2\gamma\,s_{\tau,\beta}(\Delta).
\end{multline*}
If $\eta>\tfrac12$ (so $\gamma>0$), then $s_{\tau,\beta}(\Delta)>0$ for $\Delta>0$,
and therefore $\mathcal{R}_{\mathrm{Cop}}(\Delta)<\mathcal{R}_{\mathrm{Cop}}(-\Delta)$.
Consequently, no minimizer can be negative; by symmetry, if $\eta<\tfrac12$ then
no minimizer can be positive.

It remains to rule out $\Delta^\star=0$ when $\eta\neq\tfrac12$.
Differentiate:
\[
\mathcal{R}'_{\mathrm{Cop}}(\Delta)= -\gamma\,s'_{\tau,\beta}(\Delta)+\lambda\Delta.
\]
Since $s_{\tau,\beta}$ is strictly increasing and smooth, $s'_{\tau,\beta}(0)>0$,
and thus
\[
\mathcal{R}'_{\mathrm{Cop}}(0)= -\gamma\,s'_{\tau,\beta}(0).
\]
If $\eta>\tfrac12$ then $\gamma>0$ and $\mathcal{R}'_{\mathrm{Cop}}(0)<0$, so $0$
cannot be a minimizer; hence every minimizer satisfies $\Delta^\star>0$.
Similarly, if $\eta<\tfrac12$ then $\mathcal{R}'_{\mathrm{Cop}}(0)>0$ and every
minimizer satisfies $\Delta^\star<0$.
If $\eta=\tfrac12$ then $\gamma=0$ and the risk reduces to $\frac{\lambda}{2}\Delta^2$,
uniquely minimized at $\Delta=0$.

Finally, when $\eta\neq\tfrac12$ and $\lambda\downarrow 0$ with $(\tau,\beta)$ fixed,
the bounded term $-\gamma s_{\tau,\beta}(\Delta)$ is minimized by pushing
$s_{\tau,\beta}(\Delta)$ toward its extremum, which occurs only as $|\Delta|\to\infty$.
\end{proof}

\paragraph{From pairwise direction to Copeland score.}
Define the induced soft Copeland score of an alternative $x$ by
\begin{equation}
\label{eq:cop-score-pop}
C_{\tau,\beta}(x)
=
\mathbb{E}_{x'\sim \rho}\Big[s_{\tau,\beta}\big(r_\theta(x,c)-r_\theta(x',c)\big)\Big],
\end{equation}
where $\rho$ is the comparison-opponent distribution induced by $\mathcal{D}$.

\begin{theorem}[Limit to classical Copeland counting]
\label{thm:cop-limit}
Assume $r_\theta(\cdot,c)$ induces strict inequalities almost surely.
Then for each fixed $\theta$,
\[
\lim_{\tau\to 0}\lim_{\beta\to\infty} C_{\tau,\beta}(x)
=
\mathbb{E}_{x'\sim\rho}\Big[\operatorname{sign}\big(r_\theta(x,c)-r_\theta(x',c)\big)\Big],
\]
which is exactly the (normalized) Copeland win-minus-loss count under the induced
pairwise outcomes.
\end{theorem}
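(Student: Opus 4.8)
The plan is to evaluate the iterated limit by first handling the inner limit $\beta\to\infty$ pointwise in the opponent $x'$, then passing that limit through the expectation via dominated convergence, and finally observing that the resulting quantity no longer depends on $\tau$, so the outer limit is vacuous. Throughout I would fix $\theta$, $x$, and $c$, and write $\Delta(x') = r_\theta(x,c)-r_\theta(x',c)$ for the margin against opponent $x'$.

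First I would analyze $\lim_{\beta\to\infty} s_{\tau,\beta}(\Delta(x'))$ for a fixed $\tau>0$ and a fixed opponent with $\Delta(x')\neq 0$. Since $\sigma$ is strictly increasing with $\sigma(0)=\tfrac12$ and $\tau>0$, the quantity $u_\tau(x'):=\sigma(\Delta(x')/\tau)-\tfrac12$ is nonzero and satisfies $\operatorname{sign}(u_\tau(x'))=\operatorname{sign}(\Delta(x'))$. Because $\tanh(\beta u)\to\operatorname{sign}(u)$ as $\beta\to\infty$ for any fixed $u\neq 0$, this gives $\lim_{\beta\to\infty} s_{\tau,\beta}(\Delta(x')) = \operatorname{sign}(\Delta(x'))$. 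The key structural observation is that this pointwise limit equals $\operatorname{sign}(\Delta(x'))$ \emph{independently of $\tau$}, which is what ultimately makes the outer limit trivial.

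Next I would push this limit inside the expectation. The integrand is uniformly bounded, $|s_{\tau,\beta}|\le 1$, so the constant function $1$ dominates it and is $\rho$-integrable. The only opponents for which the pointwise limit could fail are the ties $\{x':\Delta(x')=0\}$; by the hypothesis that $r_\theta(\cdot,c)$ induces strict inequalities almost surely, this set has $\rho$-measure zero. Dominated convergence then yields, for each fixed $\tau>0$, that $\lim_{\beta\to\infty} C_{\tau,\beta}(x) = \mathbb{E}_{x'\sim\rho}\big[\operatorname{sign}(\Delta(x'))\big]$.

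Finally, since the right-hand side is a constant not depending on $\tau$, the outer limit $\tau\to 0$ returns the same value, which is exactly the normalized win-minus-loss Copeland count: $\operatorname{sign}(\Delta(x'))$ equals $+1$ when $x$ beats $x'$ and $-1$ when it loses, so its $\rho$-expectation is the winning probability minus the losing probability. The main obstacle is justifying the interchange of limit and expectation, which rests entirely on using the almost-sure strictness assumption to discard the tie set of $\rho$-measure zero; once the $\tau$-independence of the inner limit is recognized, the limits themselves are elementary.
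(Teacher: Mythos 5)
Your proposal is correct and follows essentially the same route as the paper's proof: evaluate the inner $\beta\to\infty$ limit pointwise using $\tanh(\beta u)\to\operatorname{sign}(u)$ and the fact that $\sigma(\Delta/\tau)-\tfrac12$ shares the sign of $\Delta$, pass the limit through the expectation by dominated convergence with the bound $|s_{\tau,\beta}|\le 1$, and note that the resulting quantity is $\tau$-independent so the outer limit is trivial. Your explicit handling of the measure-zero tie set is a slightly more careful spelling-out of a step the paper leaves implicit, but the argument is the same.
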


\begin{proof}
Fix $\tau>0$ and write $\Delta=r_\theta(x,c)-r_\theta(x',c)$.
For $\Delta\neq 0$, we have $\sigma(\Delta/\tau)-\tfrac12$ has the same sign as $\Delta$,
and hence as $\beta\to\infty$,
\begin{multline*}
    s_{\tau,\beta}(\Delta)
=
\tanh\!\left(\beta\left(\sigma(\Delta/\tau)-\tfrac12\right)\right)\\
\longrightarrow
\operatorname{sign}\!\left(\sigma(\Delta/\tau)-\tfrac12\right)
=
\operatorname{sign}(\Delta).
\end{multline*}
Because $|s_{\tau,\beta}(\Delta)|\le 1$, dominated convergence yields
\[
\lim_{\beta\to\infty} C_{\tau,\beta}(x)
=
\mathbb{E}_{x'\sim\rho}\big[\operatorname{sign}(\Delta)\big].
\]
The right-hand side does not depend on $\tau$, so taking $\tau\to 0$ leaves it unchanged,
proving the claim.
\end{proof}

\paragraph{Axiomatic properties.}
\begin{proposition}[Neutrality and anonymity]
\label{prop:cop-neutral-anon}
The Soft Copeland loss is neutral and anonymous.
\end{proposition}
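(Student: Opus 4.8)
The plan is to verify the two invariances directly from the algebraic structure of $\mathcal{L}_{\mathrm{Cop}}$, reading \emph{anonymity} as invariance of the population objective under permutations of voters (equivalently, of the annotator/context labels underlying $\mathcal{D}$) and \emph{neutrality} as equivariance of the induced soft Copeland score $C_{\tau,\beta}$ under permutations of the alternatives. Throughout I assume, as is standard when asserting neutrality, that the hypothesis class $\{r_\theta(\cdot,c):\theta\in\Theta\}$ and the opponent distribution $\rho$ are themselves symmetric under relabeling of alternatives, so that permuting candidate labels maps the model class to itself.

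For \emph{anonymity}, the key observation is that the instance loss $\mathcal{L}_{\mathrm{Cop}}(\theta;x_i,x_j,y,c)$ in Eq.~\eqref{eq:cop-loss} references a comparison only through the margin $\Delta_{ij}(c)$ and the label $y$; it contains no voter index. The population objective is the expectation $\mathbb{E}_{(x_i,x_j,y,c)\sim\mathcal{D}}[\mathcal{L}_{\mathrm{Cop}}]$, which is a functional of the distribution $\mathcal{D}$ alone. Relabeling voters induces a measure-preserving bijection on comparison tuples that leaves $\mathcal{D}$ invariant as a distribution, hence leaves the objective—and therefore its minimizer set—unchanged. This step is essentially immediate once the setup is made precise.

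For \emph{neutrality}, I would first establish the elementary pair-level symmetry: under the role swap $(x_i,x_j,y)\mapsto(x_j,x_i,-y)$ we have $\Delta\mapsto-\Delta$, and since $s_{\tau,\beta}$ is odd while $\Delta^2$ is even,
\begin{equation*}
-(-y)\,s_{\tau,\beta}(-\Delta)+\tfrac{\lambda}{2}\Delta^2
=
-y\,s_{\tau,\beta}(\Delta)+\tfrac{\lambda}{2}\Delta^2,
\end{equation*}
so the loss is unchanged. Next I would lift this to a global permutation $\pi$ of the alternatives: relabeling carries each reward to $r_\theta(\pi^{-1}(\cdot),c)$ and, by the assumed symmetry of the class and of $\rho$, carries the population objective to its $\pi$-conjugate. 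Because transpositions generate the symmetric group, the pair-swap invariance then propagates so that every collection of margins is permuted consistently, and the induced score of Eq.~\eqref{eq:cop-score-pop} satisfies $C_{\tau,\beta}^{\pi}(\pi(x))=C_{\tau,\beta}(x)$; that is, the output ranking permutes exactly with $\pi$, which is neutrality.

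The main obstacle is not the computation but the formalization of neutrality at the level of the induced \emph{rule} rather than the loss: the pair-swap invariance follows instantly from oddness of $s_{\tau,\beta}$, but to conclude equivariance of the aggregate ranking one must assume the reward class and the opponent distribution are permutation-symmetric. If function approximation breaks this symmetry—precisely the phenomenon emphasized for the linear setting in \cite{ge2024axioms}—neutrality can fail, so I would state this symmetry assumption explicitly rather than treat it as automatic.
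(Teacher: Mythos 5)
Your proof is correct and follows essentially the same route as the paper's: neutrality from the oddness of $s_{\tau,\beta}$ and the evenness of $\Delta^2$ under the swap $(x_i,x_j,y)\mapsto(x_j,x_i,-y)$, and anonymity from the fact that the risk is an average of identical per-instance losses with no voter index. The only difference is that you make explicit the permutation-symmetry assumption on the hypothesis class and opponent distribution needed to lift pair-level invariance to neutrality of the induced rule, a caveat the paper's terse proof leaves implicit; this is a reasonable strengthening rather than a different argument.
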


\begin{proof}
Neutrality follows from the oddness of $s_{\tau,\beta}$ and the invariance of
$\Delta^2$ under sign flips.
Anonymity holds because empirical and population risks are averages of identical
per-instance losses.
\end{proof}

\begin{proposition}[Pairwise monotonicity]
\label{prop:cop-monotone}
For any instance with $y=+1$ and any $\Delta$ satisfying
$|\Delta| < \frac{1}{\lambda}\frac{\beta}{4\tau}$,
we have $\frac{\partial \mathcal{L}_{\mathrm{Cop}}}{\partial \Delta}<0$.
The symmetric statement holds for $y=-1$.
\end{proposition}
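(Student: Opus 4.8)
The plan is to argue directly from the gradient field in \eqref{eq:cop-dLdDelta}. Write $p=\sigma(\Delta/\tau)$ and abbreviate
\[
s'_{\tau,\beta}(\Delta)=\tfrac{\beta}{\tau}\,p(1-p)\,\mathrm{sech}^2\!\big(\beta(p-\tfrac12)\big)>0 ,
\]
so that for $y=+1$ the derivative reads $\partial_\Delta\mathcal{L}_{\mathrm{Cop}}=-\,s'_{\tau,\beta}(\Delta)+\lambda\Delta$. I would split on the sign of $\Delta$, dispose of the trivial side, and reduce the remaining case to a single scalar inequality.

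For $\Delta\le 0$ the claim is immediate: $s'_{\tau,\beta}(\Delta)>0$ while $\lambda\Delta\le 0$, so $\partial_\Delta\mathcal{L}_{\mathrm{Cop}}<0$ with no further work. It remains to treat $0<\Delta<\tfrac{\beta}{4\lambda\tau}$, where negativity of the derivative is equivalent to the strict inequality $s'_{\tau,\beta}(\Delta)>\lambda\Delta$. The natural anchor is the peak value: evaluating at $p=\tfrac12$ gives $s'_{\tau,\beta}(0)=\tfrac{\beta}{4\tau}$, and at the right endpoint $\Delta=\tfrac{\beta}{4\lambda\tau}$ the linear term satisfies $\lambda\Delta=\tfrac{\beta}{4\tau}=s'_{\tau,\beta}(0)$. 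Thus the target inequality says exactly that the graph of $s'_{\tau,\beta}$ stays strictly above the chord joining the origin to the point $\big(\tfrac{\beta}{4\lambda\tau},\,s'_{\tau,\beta}(0)\big)$.

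The main obstacle is a quantitative lower bound on $s'_{\tau,\beta}(\Delta)$ across the whole interval $(0,\tfrac{\beta}{4\lambda\tau})$. Although $s'_{\tau,\beta}$ is even and attains its peak $\tfrac{\beta}{4\tau}$ at the center $\Delta=0$, it decreases as $\Delta$ grows, because both $p(1-p)$ and $\mathrm{sech}^2(\beta(p-\tfrac12))$ shrink once $p$ moves away from $\tfrac12$. The delicate point is therefore comparing two decays: one must verify that $s'_{\tau,\beta}$ does not fall below the rising line $\lambda\Delta$ before the endpoint. I would attempt this by bounding the two factors separately on the interval—controlling $p(1-p)$ through $p\le\sigma(\tfrac{\beta}{4\lambda\tau^2})$ and $\mathrm{sech}^2(\beta(p-\tfrac12))$ through $\beta(p-\tfrac12)<\tfrac{\beta}{2}$—and comparing the resulting lower envelope with $\lambda\Delta$. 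This comparison is where the joint scaling of $\tau$, $\beta$, and $\lambda$ genuinely enters, and I expect it to be the crux: if the lower bound cannot be carried all the way to $\tfrac{\beta}{4\lambda\tau}$, the clean conclusion would survive only on a proportionally smaller neighborhood of the origin, so pinning down the exact admissible width is the real content of the argument.

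Finally, the $y=-1$ case follows by symmetry. Oddness of $s_{\tau,\beta}$ makes $s'_{\tau,\beta}$ even and the regularizer $\tfrac{\lambda}{2}\Delta^2$ is invariant under $\Delta\mapsto-\Delta$, so $\partial_\Delta\mathcal{L}_{\mathrm{Cop}}=s'_{\tau,\beta}(\Delta)+\lambda\Delta$; applying the same reduction after sending $\Delta\mapsto-\Delta$ yields $\partial_\Delta\mathcal{L}_{\mathrm{Cop}}>0$ on $|\Delta|<\tfrac{\beta}{4\lambda\tau}$.
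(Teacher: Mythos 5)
Your proposal correctly reduces the claim to the inequality $s'_{\tau,\beta}(\Delta)>\lambda\Delta$ on $\bigl(0,\tfrac{\beta}{4\lambda\tau}\bigr)$ (the case $y\Delta\le 0$ being trivial), but it stops exactly at the crux and never establishes that inequality. This is a genuine gap, and in fact it cannot be closed: the inequality is \emph{false} near the right endpoint for every choice of $\tau,\beta,\lambda>0$. Since $s'_{\tau,\beta}(\Delta)=\tfrac{\beta}{\tau}p(1-p)\,\mathrm{sech}^2(\beta(p-\tfrac12))$ attains its maximum $\tfrac{\beta}{4\tau}$ only at $\Delta=0$ and is strictly smaller for $\Delta\neq 0$, while $\lambda\Delta$ rises to exactly $\tfrac{\beta}{4\tau}$ at $\Delta=\tfrac{\beta}{4\lambda\tau}$, continuity forces $s'_{\tau,\beta}(\Delta)<\lambda\Delta$ on a nonempty left-neighborhood of the endpoint. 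The failure is not merely boundary-local: for small $\lambda$ the interval has length $\Theta(1/\lambda)$ whereas $s'_{\tau,\beta}(\Delta)$ decays like $e^{-\Delta/\tau}$, so the derivative is positive on most of the claimed interval (e.g.\ $\tau=1$, $\beta=4$, $\lambda=10^{-6}$, $\Delta=100$ gives $\lambda\Delta=10^{-4}$ against $s'\approx 4e^{-100}$). So the proposition as stated is incorrect, and your instinct that ``the clean conclusion would survive only on a proportionally smaller neighborhood of the origin'' is exactly right.

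For comparison, the paper's own proof is a one-liner that commits precisely the error you refused to make: it notes that the first term of \eqref{eq:cop-dLdDelta} has magnitude \emph{at most} $\tfrac{\beta}{4\tau}$ and then concludes that it ``dominates'' $\lambda|\Delta|$ whenever $\lambda|\Delta|<\tfrac{\beta}{4\tau}$ --- but domination requires a \emph{lower} bound on the first term, not an upper bound. Your write-up is therefore more honest than the published argument, though neither constitutes a proof. A correct statement would either restrict to $y\Delta\le 0$ (where monotonicity holds unconditionally), or replace the threshold by one derived from a genuine lower bound on $s'_{\tau,\beta}$, e.g.\ requiring $\lambda\Delta<s'_{\tau,\beta}(\Delta)$ directly, which defines an implicit (and much smaller) admissible interval around the origin.
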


\begin{proof}
From \eqref{eq:cop-dLdDelta}, the first term has sign $-y$ and magnitude at most
$\frac{\beta}{4\tau}$, while the regularization term has magnitude $\lambda|\Delta|$.
For $|\Delta|<\frac{1}{\lambda}\frac{\beta}{4\tau}$, the former dominates, yielding
the stated sign.
\end{proof}

\paragraph{IIA failure.}
As with classical Copeland, the induced aggregation violates independence of
irrelevant alternatives.
Since Soft Copeland converges to Copeland counting in the limit
(Theorem~\ref{thm:cop-limit}), it inherits the standard IIA counterexamples.

\subsection{Soft Kemeny Voting}
\label{subsec:kem}

Kemeny aggregation selects a ranking that minimizes the total number of pairwise
disagreements (equivalently, the Kendall $\tau$ distance) with respect to observed
comparisons.
To approximate this objective using gradient-based optimization, we use a smooth
surrogate that directly targets pairwise disagreement and converges to exact
disagreement counting as the temperature vanishes.

\paragraph{Smooth disagreement surrogate.}
For a comparison $(x_i,x_j,y,c)$ with margin
\[
\Delta=\Delta_{ij}(c)=r_\theta(x_i,c)-r_\theta(x_j,c),
\]
define the \emph{Soft Kemeny loss}
\begin{equation}
\label{eq:kem-loss}
\mathcal{L}_{\mathrm{Kem}}(\theta;x_i,x_j,y,c)
=
\sigma\!\left(-\frac{y\Delta_{ij}(c)}{\tau}\right),
\qquad \tau>0,
\end{equation}
where $\sigma(z)=(1+e^{-z})^{-1}$.

This loss is a smooth approximation to the pairwise disagreement indicator:
for $\Delta\neq 0$,
\[
\sigma\!\left(-\frac{y\Delta}{\tau}\right)
\;\longrightarrow\;
\mathbf{1}[y\Delta<0]
\qquad (\tau\to 0).
\]

\paragraph{Gradient field.}
Writing $\Delta=\Delta_{ij}(c)$, we have
\begin{equation}
\label{eq:kem-dLdDelta}
\frac{\partial \mathcal{L}_{\mathrm{Kem}}}{\partial \Delta}
=
-\frac{y}{\tau}\,
\sigma\!\left(-\frac{y\Delta}{\tau}\right)
\left(1-\sigma\!\left(-\frac{y\Delta}{\tau}\right)\right).
\end{equation}
Parameter gradients follow by multiplying
$\nabla_\theta\Delta_{ij}(c)
=
\nabla_\theta r_\theta(x_i,c)-\nabla_\theta r_\theta(x_j,c)$.

\paragraph{Global pairwise monotonicity.}
The loss always corrects pairwise errors in the appropriate direction.

\begin{proposition}[Global pairwise monotonicity]
\label{prop:kem-monotone}
Fix an instance with label $y\in\{-1,+1\}$.
Then for all $\Delta\in\mathbb{R}$,
\[
\operatorname{sign}\!\left(
\frac{\partial \mathcal{L}_{\mathrm{Kem}}}{\partial \Delta}
\right)
=
-\,y .
\]
Thus, gradient descent increases $\Delta$ when $y=+1$ and decreases $\Delta$ when
$y=-1$, for all margin values.
\end{proposition}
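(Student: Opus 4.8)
The plan is to read off the sign directly from the gradient expression already computed in \eqref{eq:kem-dLdDelta}, since the only thing that varies with $\Delta$ is a manifestly positive factor. First I would recall that the logistic sigmoid satisfies $\sigma(z)\in(0,1)$ for every real argument $z$, so in particular the product $\sigma(u)\bigl(1-\sigma(u)\bigr)$ is strictly positive for all $u\in\mathbb{R}$. Applying this with $u=-y\Delta/\tau$ shows that the factor $\sigma(-y\Delta/\tau)\bigl(1-\sigma(-y\Delta/\tau)\bigr)$ appearing in \eqref{eq:kem-dLdDelta} is strictly positive for every $\Delta$, regardless of the sign or magnitude of the margin.

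Second, I would isolate the sign-determining prefactor. Since $\tau>0$ and $y\in\{-1,+1\}$, the scalar $-y/\tau$ has sign exactly $-y$. Multiplying this by the strictly positive factor from the previous step leaves the overall sign unchanged, so
\[
\operatorname{sign}\!\left(\frac{\partial \mathcal{L}_{\mathrm{Kem}}}{\partial \Delta}\right)
= \operatorname{sign}\!\left(-\frac{y}{\tau}\right) = -y,
\]
uniformly in $\Delta$. The interpretation then follows immediately: for $y=+1$ the derivative is strictly negative, so a gradient step increases $\Delta$, while for $y=-1$ the derivative is strictly positive, so a gradient step decreases $\Delta$.

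I do not anticipate any genuine obstacle here; the entire content is that the logistic derivative $\sigma(1-\sigma)$ never vanishes on the real line, which rules out the sign degeneracies (flat regions, or a gradient that saturates to zero at finite $\Delta$) that one might otherwise worry about. The only point requiring care is to emphasize that the claim is truly \emph{global} — it holds for all $\Delta\in\mathbb{R}$, including arbitrarily confident or arbitrarily wrong margins — which is exactly what distinguishes this proposition from the Soft Copeland monotonicity result (Proposition~\ref{prop:cop-monotone}), where the quadratic regularizer eventually dominates and the sign guarantee is only local.
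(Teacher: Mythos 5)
Your proof is correct and follows essentially the same route as the paper: both read the sign directly off \eqref{eq:kem-dLdDelta} by noting that the logistic factor $\sigma(-y\Delta/\tau)\bigl(1-\sigma(-y\Delta/\tau)\bigr)$ never vanishes and the prefactor $-y/\tau$ carries the sign. If anything, you are slightly more careful than the paper, which writes ``$\ge 0$'' for the logistic factor where strict positivity is what the sign equality actually requires.
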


\begin{proof}
From \eqref{eq:kem-dLdDelta}, the derivative is proportional to $-y$ multiplied by
$\sigma(-y\Delta/\tau)(1-\sigma(-y\Delta/\tau))\ge 0$, yielding the stated sign.
\end{proof}

\paragraph{Vanishing gradients for confident correct orderings.}
If $y\Delta\to +\infty$, then
$\sigma(-y\Delta/\tau)\to 0$ and
\[
\left|
\frac{\partial \mathcal{L}_{\mathrm{Kem}}}{\partial \Delta}
\right|
\;\longrightarrow\;0 .
\]
Thus, correctly ordered pairs with large margins contribute vanishing gradient
mass, matching the intuition of disagreement counting: once a pair is confidently
ordered correctly, there is no incentive to further increase its margin.

\paragraph{Limit to Kendall disagreement.}
We recover exact pairwise disagreement counting in the low-temperature limit.

\begin{theorem}[Limit to pairwise disagreement objective]
\label{thm:kem-limit-objective}
Assume $\Pr(\Delta_{ij}(c)=0)=0$ under $\mathcal{D}$.
Then for each fixed $\theta$,
\[
\lim_{\tau\to 0}
\mathbb{E}_{\mathcal{D}}\big[\mathcal{L}_{\mathrm{Kem}}(\theta)\big]
=
\mathbb{E}_{\mathcal{D}}\big[\mathbf{1}[y\Delta_{ij}(c)<0]\big].
\]
\end{theorem}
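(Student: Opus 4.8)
The plan is to reduce the claim to an almost-sure pointwise convergence of the integrand together with a uniform bound, so that the dominated convergence theorem permits exchanging the limit with the expectation. Write the per-instance integrand as $f_\tau(x_i,x_j,y,c) = \sigma(-y\Delta_{ij}(c)/\tau)$, and recall that $\tau\to 0$ here means $\tau\downarrow 0$, since $\tau>0$ throughout.

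First I would establish pointwise convergence off the zero-margin set. Fix an instance $(x_i,x_j,y,c)$ with $\Delta := \Delta_{ij}(c)\neq 0$. Since $y\in\{-1,+1\}$, the product $y\Delta$ is a nonzero real. If $y\Delta>0$, then $-y\Delta/\tau\to-\infty$ as $\tau\downarrow 0$, and because $\sigma(z)\to 0$ as $z\to-\infty$ we obtain $f_\tau\to 0=\mathbf{1}[y\Delta<0]$. If $y\Delta<0$, then $-y\Delta/\tau\to+\infty$ and $\sigma(z)\to 1$ gives $f_\tau\to 1=\mathbf{1}[y\Delta<0]$. Hence $f_\tau\to\mathbf{1}[y\Delta<0]$ at every instance with $\Delta\neq 0$, and by the hypothesis $\Pr(\Delta_{ij}(c)=0)=0$ this convergence holds $\mathcal{D}$-almost surely.

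Next I would invoke domination: because $\sigma$ takes values in $[0,1]$, we have $0\le f_\tau\le 1$ uniformly in $\tau$, and the constant $1$ is integrable under the probability measure $\mathcal{D}$. Applying dominated convergence along any sequence $\tau_n\downarrow 0$ then yields $\lim_n \mathbb{E}_{\mathcal{D}}[f_{\tau_n}]=\mathbb{E}_{\mathcal{D}}[\mathbf{1}[y\Delta_{ij}(c)<0]]$; since the limiting value does not depend on the chosen sequence, the full limit $\tau\to 0$ exists and equals the claimed quantity.

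The only delicate point is the treatment of the zero-margin event $\{\Delta_{ij}(c)=0\}$, on which $f_\tau\equiv\sigma(0)=\tfrac12$ for every $\tau$ and therefore does \emph{not} converge to the indicator; the hypothesis $\Pr(\Delta_{ij}(c)=0)=0$ is exactly what discards this null set, so it contributes nothing to either side of the identity. Everything else is routine—namely the limiting behavior of $\sigma$ at $\pm\infty$ and the boundedness needed for dominated convergence—so I do not anticipate any genuine obstacle beyond correctly isolating and dismissing this null event.
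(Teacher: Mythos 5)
Your proposal is correct and follows essentially the same route as the paper's own proof: pointwise convergence of $\sigma(-y\Delta/\tau)$ to the disagreement indicator on the event $\{\Delta_{ij}(c)\neq 0\}$, followed by dominated convergence using the uniform bound $0\le\mathcal{L}_{\mathrm{Kem}}\le 1$. Your explicit remarks on discarding the null set $\{\Delta_{ij}(c)=0\}$ and on sequence-independence of the limit are slightly more careful than the paper's version but do not change the argument.
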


\begin{proof}
Fix $\Delta\neq 0$ and $y\in\{\pm 1\}$.
If $y\Delta<0$, then $-y\Delta/\tau\to +\infty$ as $\tau\to 0$, so
$\sigma(-y\Delta/\tau)\to 1$.
If $y\Delta>0$, then $-y\Delta/\tau\to -\infty$, so
$\sigma(-y\Delta/\tau)\to 0$.
Thus $\mathcal{L}_{\mathrm{Kem}}(\theta)\to \mathbf{1}[y\Delta<0]$ pointwise.
Since $0\le \mathcal{L}_{\mathrm{Kem}}\le 1$, dominated convergence applies.
\end{proof}

\begin{figure*}[t]
    \centering
    \includegraphics[width=\linewidth]{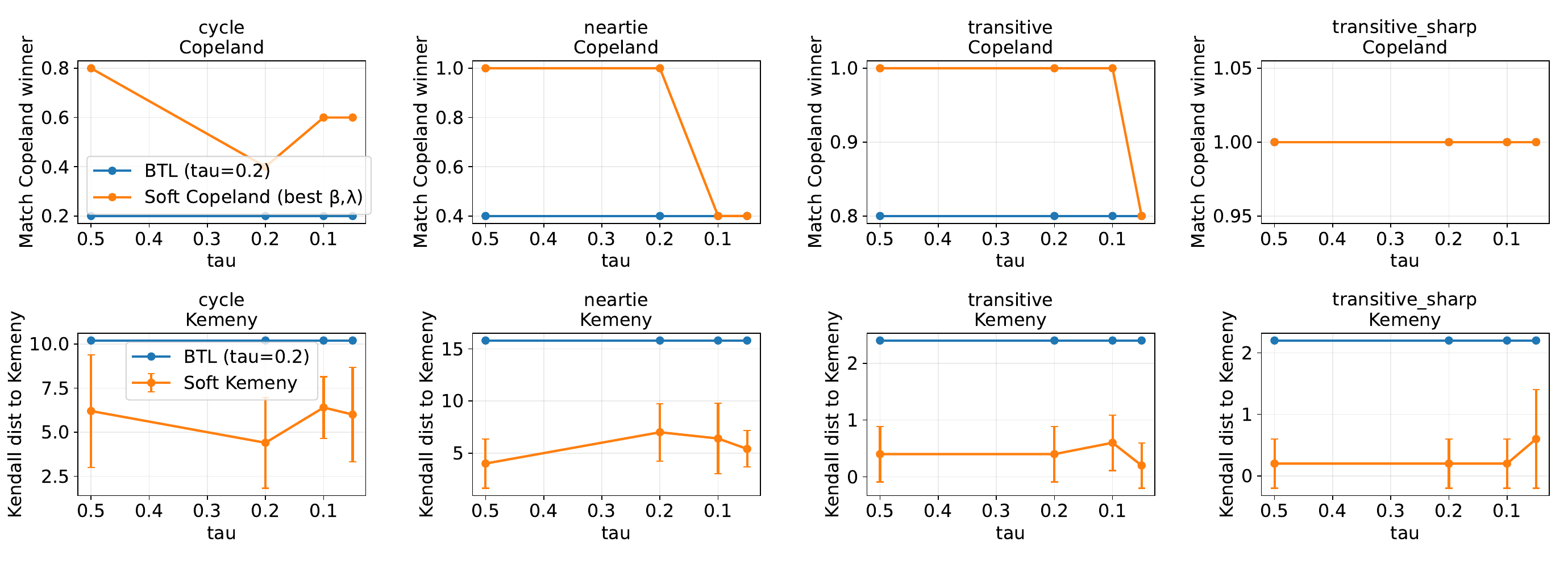}
    \vspace{-2em}
    \caption{Population-level recovery of classical aggregation rules under smoothing.
    Top row: agreement with the Copeland winner; bottom row: Kendall distance to the Kemeny-optimal ranking.
    Soft Copeland and Soft Kemeny converge to their respective classical rules with appropriate $\tau$,
    while BTL remains misaligned across cyclic, near-tie, and transitive regimes.}
    \label{fig:exp1}
\end{figure*}

\begin{theorem}[Kemeny consistency under realizability]
\label{thm:kem-consistency}
Let $\pi_\theta$ be the strict ranking induced by sorting $r_\theta(\cdot,c)$ and
$\Pi_\Theta=\{\pi_\theta:\theta\in\Theta\}$.
Assume the conditions of Theorem~\ref{thm:kem-limit-objective} hold.

For any sequence $\tau_n \downarrow 0$, let
\[
\theta_n \in \arg\min_{\theta\in\Theta}
\mathbb{E}\big[\mathcal{L}_{\mathrm{Kem}}^{(\tau_n)}(\theta)\big].
\]
Then every accumulation point $\theta^\star$ of $\{\theta_n\}$ induces a ranking
$\pi_{\theta^\star}$ satisfying
\[
\pi_{\theta^\star}
\;\in\;
\arg\min_{\pi\in\Pi_\Theta}
\mathbb{E}\big[\mathbf{1}[\pi \text{ disagrees with } y]\big].
\]

In particular, if a Kemeny-optimal ranking is realizable by $\Pi_\Theta$, then any
such limit point induces a Kemeny-optimal ranking; otherwise, it induces a
Kemeny-optimal ranking restricted to the feasible set $\Pi_\Theta$.
\end{theorem}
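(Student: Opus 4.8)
The plan is to run an elementary $\Gamma$-convergence argument, exploiting that the Soft Kemeny surrogates are uniformly bounded in $[0,1]$ so that dominated convergence controls everything. Write $F_\tau(\theta) := \mathbb{E}_{\mathcal{D}}[\mathcal{L}_{\mathrm{Kem}}^{(\tau)}(\theta)]$ for the smoothed population risk and $D(\theta) := \mathbb{E}_{\mathcal{D}}[\mathbf{1}[y\Delta_{ij}(c)<0]]$ for the limiting disagreement objective. Because, off the measure-zero tie event $\{\Delta_{ij}(c)=0\}$, the integrand depends on the margin only through its sign, $D(\theta)$ depends on $\theta$ only through the induced ranking $\pi_\theta$; I will therefore freely write $D(\pi)$ for $\pi\in\Pi_\Theta$. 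Throughout I assume the mild regularity that $\theta\mapsto r_\theta(x,c)$ is continuous for each fixed $(x,c)$, so that parameter convergence transfers to margin convergence.

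Fix an accumulation point $\theta^\star$ and a subsequence $\theta_{n_k}\to\theta^\star$ with $\tau_{n_k}\downarrow 0$. The core step is to show $F_{\tau_{n_k}}(\theta_{n_k})\to D(\theta^\star)$ even though both the parameter and the temperature move simultaneously. I would argue pointwise: for $\mathcal{D}$-almost every comparison $\omega=(x_i,x_j,y,c)$ the limiting margin $\Delta^\star(\omega)=r_{\theta^\star}(x_i,c)-r_{\theta^\star}(x_j,c)$ is nonzero (by the no-tie assumption applied at $\theta^\star$), so by continuity the margins $\Delta^{(n_k)}(\omega)$ at $\theta_{n_k}$ converge to $\Delta^\star(\omega)$ and are eventually bounded away from $0$ with a fixed sign. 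Consequently $-y\Delta^{(n_k)}(\omega)/\tau_{n_k}\to\pm\infty$ with the sign matching $\mathbf{1}[y\Delta^\star(\omega)<0]$, giving $\sigma(-y\Delta^{(n_k)}(\omega)/\tau_{n_k})\to\mathbf{1}[y\Delta^\star(\omega)<0]$ for a.e. $\omega$. Since the surrogate is dominated by the constant $1$, dominated convergence yields $F_{\tau_{n_k}}(\theta_{n_k})\to D(\theta^\star)$. This is the step I expect to be the main obstacle, precisely because the ranking map $\theta\mapsto\pi_\theta$ is discontinuous at tie configurations; the resolution is that we never need $\pi_{\theta_{n_k}}=\pi_{\theta^\star}$, only pointwise convergence of the bounded integrands.

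With this recovery-along-the-sequence in hand, optimality is a one-line comparison. For any competitor ranking $\pi'\in\Pi_\Theta$ pick a fixed $\theta'$ with $\pi_{\theta'}=\pi'$; by Theorem~\ref{thm:kem-limit-objective} (pointwise convergence at the fixed parameter $\theta'$) we have $F_{\tau_{n_k}}(\theta')\to D(\pi')$. The minimizing property of $\theta_{n_k}$ gives $F_{\tau_{n_k}}(\theta_{n_k})\le F_{\tau_{n_k}}(\theta')$ for every $k$; letting $k\to\infty$ and using both limits yields $D(\pi_{\theta^\star})=D(\theta^\star)\le D(\pi')$. As $\pi'$ was an arbitrary element of $\Pi_\Theta$, this shows $\pi_{\theta^\star}\in\arg\min_{\pi\in\Pi_\Theta}D(\pi)$, which is exactly the claimed feasible-set Kemeny optimality. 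The ``in particular'' clause is then immediate: if some Kemeny-optimal ranking lies in $\Pi_\Theta$ it attains the minimum of $D$ over $\Pi_\Theta$, so $\pi_{\theta^\star}$ is globally Kemeny-optimal; otherwise $\pi_{\theta^\star}$ minimizes disagreement over the realizable rankings.
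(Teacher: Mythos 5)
Your proposal is correct and follows essentially the same route as the paper's proof: establish the joint limit $F_{\tau_n}(\theta_n)\to F_0(\theta^\star)$ by pointwise almost-sure convergence of the bounded integrands (using the no-tie condition at $\theta^\star$ plus continuity of $\theta\mapsto r_\theta$) and dominated convergence, then pass to the limit in the optimality inequality $F_{\tau_n}(\theta_n)\le F_{\tau_n}(\theta')$ against a fixed competitor. The only cosmetic differences are that you make the continuity hypothesis explicit and phrase the comparison step over rankings $\pi'\in\Pi_\Theta$ rather than over parameters $\theta\in\Theta$, which the paper handles by noting afterward that $F_0$ depends on $\theta$ only through $\pi_\theta$.
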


\begin{proof}
For a comparison $(x_i,x_j,y,c)\sim\mathcal{D}$ write
$\Delta_\theta=r_\theta(x_i,c)-r_\theta(x_j,c)$ and
$T_\theta=y\,\Delta_\theta$.
For $\tau>0$ define the population risks
\[
F_\tau(\theta)=\mathbb{E}\!\left[\sigma\!\left(-\frac{T_\theta}{\tau}\right)\right],
\qquad
F_0(\theta)=\mathbb{E}\big[\mathbf{1}[T_\theta<0]\big].
\]
By Theorem~\ref{thm:kem-limit-objective}, $F_\tau(\theta)\to F_0(\theta)$ pointwise
for every fixed $\theta$.

Let $\tau_n\downarrow 0$ and choose
$\theta_n\in\arg\min_{\theta\in\Theta}F_{\tau_n}(\theta)$.
Let $\theta^\star$ be an accumulation point of $\{\theta_n\}$ (passing to a
subsequence if necessary).

We first show that
\begin{equation}
\label{eq:kem-risk-convergence}
\lim_{n\to\infty}F_{\tau_n}(\theta_n)=F_0(\theta^\star).
\end{equation}
Since $\theta_n\to\theta^\star$, we have $T_{\theta_n}\to T_{\theta^\star}$ almost
surely.
Under the assumption $\Pr(T_{\theta^\star}=0)=0$, the sign of $T_{\theta_n}$
eventually agrees with that of $T_{\theta^\star}$.
Hence
\[
\sigma\!\left(-\frac{T_{\theta_n}}{\tau_n}\right)
\;\longrightarrow\;
\mathbf{1}[T_{\theta^\star}<0]
\qquad\text{almost surely}.
\]
Because the integrand is bounded in $[0,1]$, dominated convergence yields
\eqref{eq:kem-risk-convergence}.

Now fix any $\theta\in\Theta$.
By optimality of $\theta_n$ for $F_{\tau_n}$,
\[
F_{\tau_n}(\theta_n)\le F_{\tau_n}(\theta)\qquad\text{for all }n.
\]
Taking limits, using \eqref{eq:kem-risk-convergence} on the left and
Theorem~\ref{thm:kem-limit-objective} on the right, gives
\[
F_0(\theta^\star)\le F_0(\theta).
\]
Since $\theta$ was arbitrary, $\theta^\star\in\arg\min_{\theta\in\Theta}F_0(\theta)$.

Finally, $F_0(\theta)$ depends on $\theta$ only through the strict ranking
$\pi_\theta$ induced by $r_\theta(\cdot,c)$.
Thus minimizing $F_0$ over $\theta\in\Theta$ is equivalent to minimizing expected
pairwise disagreement over realizable rankings $\Pi_\Theta$.
Consequently, $\pi_{\theta^\star}$ is Kemeny-optimal within $\Pi_\Theta$.
If a classical Kemeny-optimal ranking is realizable, it is recovered; otherwise
the optimum is attained within the feasible set.
\end{proof}

\paragraph{Axiomatic properties.}
Neutrality and anonymity follow from invariance of the pairwise margins
$\Delta_{ij}(c)$ under candidate relabeling and from the instance-wise additive
structure of the empirical and population objectives.

Soft Kemeny directly targets pairwise disagreement rather than margin
maximization.
The logistic form concentrates gradient mass near decision boundaries, yields
vanishing gradients for confidently correct orderings, and converges exactly to
Kendall disagreement counting as the temperature vanishes.









\begin{figure*}
    \centering
    \includegraphics[width=\linewidth]{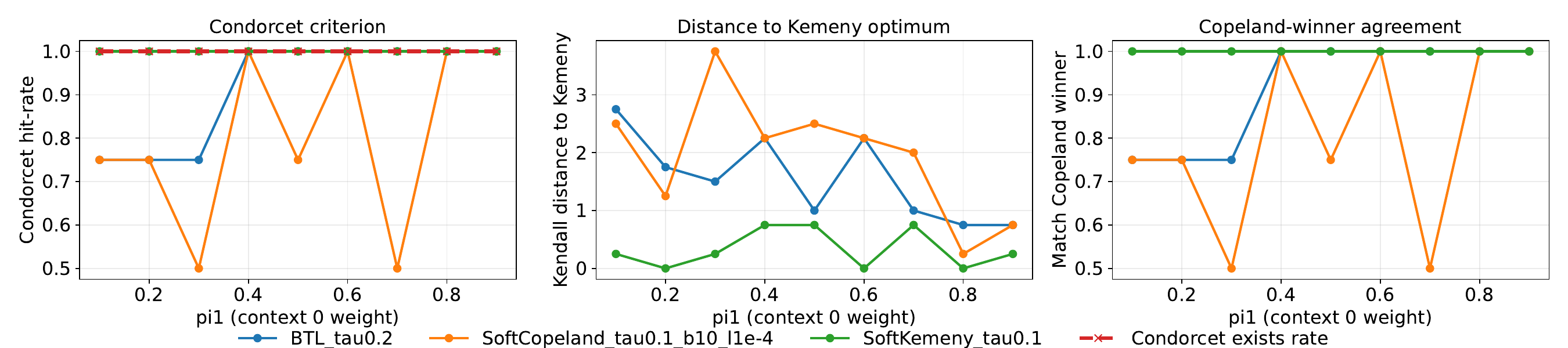}
    \caption{Aggregation under heterogeneous (hidden-context) preferences.
    Left: Condorcet winner criterion; middle: distance to the Kemeny optimum; right: Copeland-winner agreement.
    Soft Copeland and Soft Kemeny inherit the axiomatic behavior of their classical counterparts,
    whereas BTL systematically violates Condorcet consistency as context mixtures vary.}
    \label{fig:exp2}
\end{figure*}

\begin{figure}[t]
    \centering
    \includegraphics[width=\linewidth]{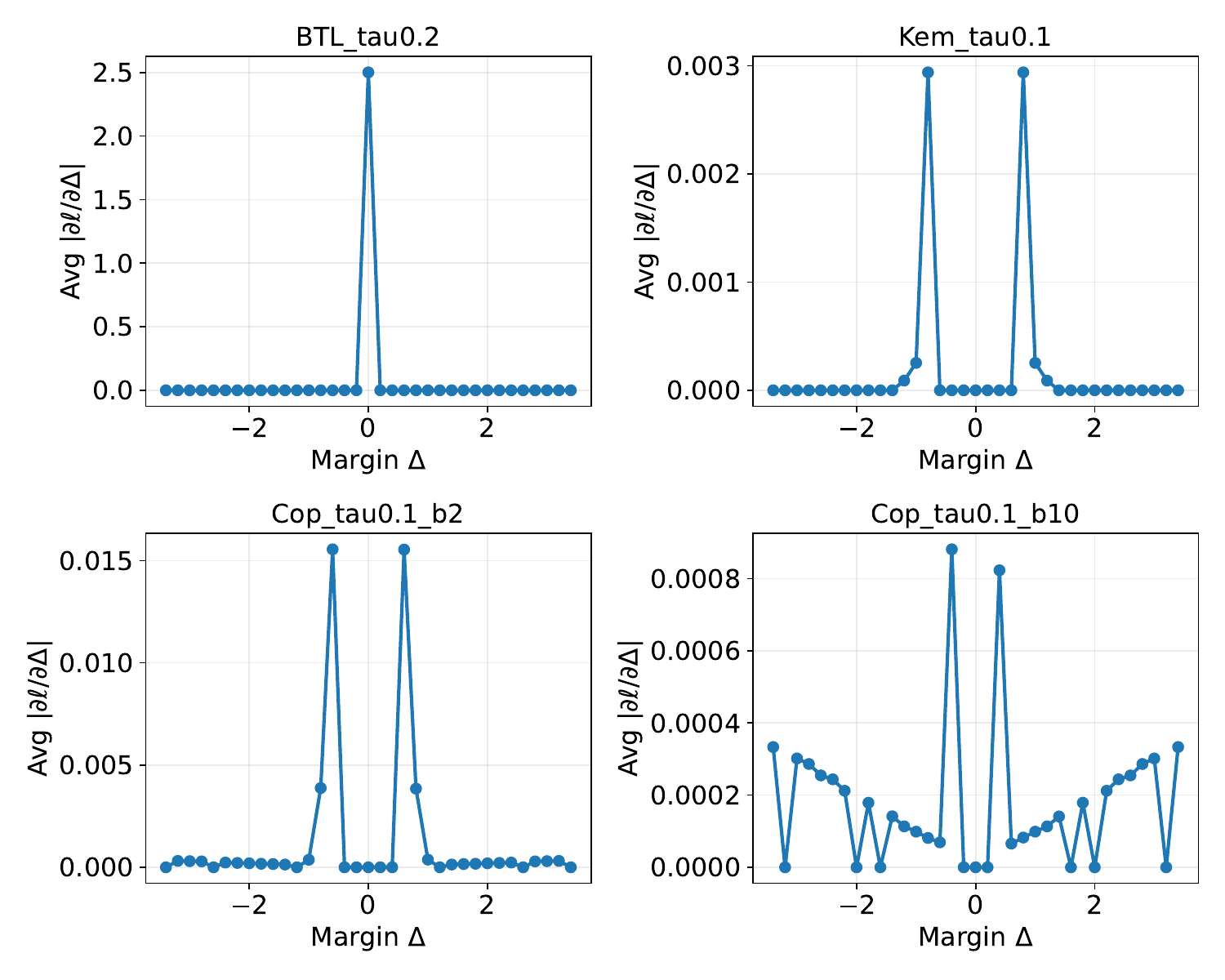}
    \caption{Loss geometry analysis.
    Average gradient magnitude as a function of margin $\Delta$ at convergence.
    BTL concentrates gradients near zero margins, Soft Kemeny targets pairwise disagreement boundaries,
    and Soft Copeland exhibits tunable majority-style saturation controlled by $\beta$.}
    \label{fig:exp3}
\end{figure}

\section{Experiments}

Our theoretical analysis shows that the choice of loss function in preference
optimization implicitly determines a social choice rule at the population
level. We evaluate this claim empirically through a controlled suite of
synthetic experiments designed to isolate aggregation behavior, axiomatic
properties, and optimization geometry.

\subsection{Research Questions}

Our experiments are guided by the following three research questions.

\paragraph{RQ1: Population-level rule recovery.}
Do differentiable losses recover their intended classical aggregation rules
(Copeland or Kemeny) at the population level as smoothing parameters vanish, and
how does this behavior compare to standard BTL-style losses?

\paragraph{RQ2: Axiomatic behavior under heterogeneous preferences.}
In the presence of heterogeneous (hidden-context) preferences, do different
losses satisfy or violate classical axioms such as the Condorcet winner
criterion and consistency with Copeland or Kemeny outcomes?

\paragraph{RQ3: Loss geometry and optimization behavior.}
How do differences in loss geometry (e.g., gradient concentration and saturation)
explain the distinct normative and optimization behaviors of the proposed losses?

\subsection{Experimental Design}

All experiments are implemented in a single, fully reproducible script and use
minimal reward models to isolate aggregation effects.

\paragraph{Synthetic preference generation.}
We consider $m\in\{5,7,9\}$ alternatives and generate pairwise preferences from
known population-level structures:
\emph{(i)} transitive utilities,
\emph{(ii)} near-tie utilities,
\emph{(iii)} cyclic (Condorcet-cycle) preferences, and
\emph{(iv)} sharply transitive utilities.
Pairwise labels are sampled from Bernoulli distributions with probabilities
$\eta_{ab}$ derived from these structures.

We compare the standard BTL (logistic) loss against our proposed Soft Copeland
and Soft Kemeny losses. The reward model is a vector $r\in\mathbb{R}^m$ (one scalar
per alternative), ensuring that differences arise solely from the loss design.

\paragraph{Experiment 1: Population recovery under smoothing.}
We optimize each loss across a grid of temperature parameters $\tau$ (and, for
Soft Copeland, saturation $\beta$ and regularization $\lambda$). We measure:
(i) agreement with the Copeland winner and
(ii) Kendall distance to the Kemeny-optimal ranking.
This experiment directly addresses RQ1.

\paragraph{Experiment 2: Hidden-context aggregation.}
To model heterogeneous preferences, each comparison is generated by first
sampling a latent context, each with its own utility function. We evaluate:
(i) satisfaction of the Condorcet winner criterion,
(ii) distance to the Kemeny optimum, and
(iii) agreement with the Copeland winner as context mixture weights vary.
This experiment addresses RQ2.

\paragraph{Experiment 3: Loss geometry analysis.}
We analyze the gradient magnitude $\lvert \partial \ell / \partial \Delta \rvert$
as a function of the margin $\Delta$ for each loss at convergence. This
experiment isolates how loss geometry concentrates optimization pressure near
decision boundaries and addresses RQ3. (We omit additional noise-robustness
experiments for clarity.)

\subsection{Results}

\paragraph{Experiment 1: Recovery of classical aggregation rules.}
Figure~\ref{fig:exp1} shows population-level recovery across four preference
regimes.
BTL exhibits stable but systematically misaligned behavior: it fails to recover
Copeland winners in cyclic and near-tie settings and maintains large Kendall
distance to Kemeny optima across all regimes.
In contrast, Soft Copeland reliably recovers the Copeland winner when a stable
majority structure exists, while exhibiting sensitivity in near-degenerate
cases, mirroring the behavior of classical Copeland aggregation.
Soft Kemeny consistently achieves substantially lower Kendall distance than BTL.

\paragraph{Experiment 2: Axioms under heterogeneous preferences.}
Figure~\ref{fig:exp2} evaluates aggregation under hidden contexts.
BTL frequently violates the Condorcet winner criterion and diverges from both
Copeland and Kemeny outcomes as context mixtures vary.
Soft Copeland closely tracks the existence of a Condorcet winner and selects it
whenever it is stable, while Soft Kemeny always satisfies the Condorcet
criterion and achieves the smallest distance to the Kemeny optimum.
These results empirically confirm that different differentiable losses implement
distinct axiomatic commitments under aggregation.

\paragraph{Experiment 3: Loss geometry.}
Figure~\ref{fig:exp3} visualizes gradient magnitude as a function of margin.
BTL concentrates gradients sharply at $\Delta\approx 0$ and rapidly decays,
reflecting margin calibration rather than win/loss counting.
Soft Kemeny exhibits symmetric peaks near decision boundaries and vanishing
gradients for confidently ordered pairs, consistent with minimizing pairwise
disagreement.
Soft Copeland shows tunable boundary concentration controlled by $\beta$,
interpolating between smooth optimization and near-discrete majority counting.
These geometric differences explain the normative and optimization behavior
observed in Experiments~1 and~2.

\subsection{Answers to Research Questions}

\paragraph{RQ1.}
Yes. Soft Copeland and Soft Kemeny recover their intended classical aggregation
rules at the population level in the appropriate limits, while BTL converges to
a distinct, scoring-based objective.

\paragraph{RQ2.}
Under heterogeneous preferences, loss choice determines which axioms are
satisfied. Soft Copeland and Soft Kemeny inherit the Condorcet consistency of
their classical counterparts, whereas BTL systematically violates it.

\paragraph{RQ3.}
Differences in loss geometry directly explain differences in aggregation
behavior. Boundary concentration and saturation encode normative assumptions
about whether margins, wins, or disagreements should dominate optimization.

Overall, these experiments demonstrate that preference optimization losses are
not interchangeable statistical surrogates: they implement qualitatively
different social choice rules. Differential Voting makes these choices explicit,
controllable, and compatible with gradient-based learning.

\section{Limitations}

Our analysis focuses on population-level objectives and their minimizers, rather
than finite-sample guarantees or full RLHF optimization dynamics; in practice,
training may be affected by sampling noise, model misspecification, and
interactions with policy optimization that we do not study. We consider
differentiable surrogates for deterministic classical voting rules in controlled
synthetic settings; extending this framework to probabilistic aggregation,
multi-winner objectives, or large-scale real-world annotation pipelines remains
future work. Finally, Differential Voting does not eliminate the trade-offs
imposed by classical impossibility results but makes them explicit at the level
of loss design.


\bibliographystyle{named}
\bibliography{ijcai26}

\end{document}